\title{Exponential Weight Functions for Quasi-Proportional Auctions}
\author{Eric Bax \and James Li \and Zheng Wen\\
Yahoo Labs}
\date{\today}
\newtheorem{theorem}{Theorem}[section]
\newtheorem{corollary}[theorem]{Corollary}
\newenvironment{proof}[1][Proof]{\begin{trivlist}
\item[\hskip \labelsep {\bfseries #1}]}{\end{trivlist}}
\newcommand{\qed}{\nobreak \ifvmode \relax \else
      \ifdim\lastskip<1.5em \hskip-\lastskip
      \hskip1.5em plus0em minus0.5em \fi \nobreak
      \vrule height0.75em width0.5em depth0.25em\fi}
\newcommand{\be}{\begin{equation}}
\newcommand{\ee}{\end{equation}}
\newcommand{\setwv}{[w_1, v_1] \times \ldots \times [w_n, v_n]}
\newcommand{\bb}{\mathbf{b}}
\newcommand{\cc}{\mathbf{c}}
\newcommand{\ww}{\mathbf{w}}
\newcommand{\br}{\mathrm{BR}}
\newcommand{\brbr}{\mathrm{\mathbf{B}\mathbf{R}}}
\begin{document}
\maketitle

\begin{abstract}
In quasi-proportional auctions, the allocation is shared among bidders in proportion to their weighted bids. The auctioneer selects a bid weight function, and bidders know the weight function when they bid. In this note, we analyze how weight functions that are exponential in the bid affect bidder behavior. We show that exponential weight functions have a pure-strategy Nash equilibrium, we characterize bids at an equilibrium, and we compare it to an equilibrium for power weight functions. 
\end{abstract}

\section{Introduction}
\label{sec_introduction}

Quasi-proportional auctions \cite{tullock80,kelly97} award each bidder a share of the allocation proportional to their weighted bid. Specifically, if $\bb$ is the vector of bids, $b_i$ is bidder $i$'s bid, and $f$ is the weight function, then bidder $i$ has allocation:
\be
a_i(\bb) = \frac{f(b_i)}{\sum_j f(b_j)}.
\ee
We focus on winners-pay, so each bidder pays their bid times their allocation. Let $v_i$ be bidder $i$'s private value for a full allocation and assume linear valuation. Then bidder $i$ has utility (profit):
\be
u_i(\bb) = (v_i - b_i) a_i(\bb).
\ee

For \textit{power weight functions} of the form $f(x)=x^p$ with $p>0$, Wen et al. \cite{wen15} show that a pure-strategy Nash equilibrium exists and supply lower bounds for bids at an equilibrium. For \textit{exponential weight functions} of the form $f(x)=e^{cx}-1$, we show that a pure-strategy Nash equilibrium exists, characterize bids at an equilibrium, and give lower bounds for those bids. In addition, we compare exponential and power weight functions, showing that the revenue-maximizing exponential weight function produces more revenue than the revenue-maximizing power weight function for even moderately competitive auctions. 

A second-price winner-take-all auction with reserve prices is known to be revenue-optimal \cite{myerson81,riley81}. So why explore revenue maximization for quasi-proportional auctions? First, the requirements for setting optimal reserve prices are not always met in practice. Second, the quasi-proportional auction has some properties that make it preferable in some settings.

The second-price winner-take-all auction with reserve prices maximizes expected revenue given prior distributions for bidders' private values. There can be problems if the auctioneer's priors for bidders' private values are unknown, innaccurate, or very precise. Unknown priors leave the auctioneer in the \textit{prior-free} setting \cite{goldberg02,hartline07}. In some cases, priors may be learned from bids in successive auctions \cite{li10,cole14,dughmi14,hummel14}, in others, private values or their distributions may change too much from auction to auction for learning to be very effective. Inaccurate priors produce reserve prices that are not optimal. This can significantly reduce revenue, because expected revenue can be very sensitive to reserve prices \cite{muthukrishnan09}. 

%Precise priors can also be a problem. As the auctioneer's priors become more sharply concentrated around the bidders' actual private values, the reserve prices converge to the private values, leaving the bidders with no profit, which may deter bidder participation in future auctions.

Quasi-proportional auctions have the following properties that can be advantages over the second-price winner-take-all auction with reserve prices:
\begin{enumerate}
\item \textbf{Full Allocation.} Quasi-proportional auctions always result in sales, assuming at least one bidder with a nonzero bid. With reserve prices, there is no sale unless some bidder bids at least their reserve price. 
\item \textbf{Shared Allocation.} Quasi-proportional auctions award some allocation to each bidder who enters a positive bid. This ``second prize for second price" \cite{wen15} can give bidders an incentive to bid even if they know they are unlikely to be the highest bidders. 
\item \textbf{Smooth Response to Bid Changes.} In a quasi-proportional auction, each bidder's allocation and payment varies continuously in their bid, increasing for any bid increase and decreasing for any bid decrease. With winner-take-all, a slight increase or decrease in bid either has no effect or completely alters the allocation and payment.
\item \textbf{Symmetric Outcomes for Bidders.} In a quasi-proportional auction, if two bidders swap bids, then they swap allocations and payments. With reserve prices, this may not hold, because bidders with different priors will, in general, have different reserve prices. 
\end{enumerate}

\section{Pure-Strategy Nash Equilibria}
In this section, we show that for $f(x)=e^{cx}-1$ there is a pure-strategy Nash equilibrium, we characterize bids at an equilibrium, and we give lower bounds for bids at an equilibrium. To simplify notation, when we focus on a single bidder, we drop the bidder's subscript (for example using $b$ for the bid instead of $b_i$), we drop the arguments in parentheses for functions, and we use apostrophes to denote derivatives with respect to $b$. For example, $f''$ is the second derivative of the weight function with respect to bid. 

Focus on a single bidder. Their \textit{response curve} is their utility as a function of their bid, given other bidders' bids. Their \textit{best response} is the bid that maximizes their utility given other bidders' bids. 

\begin{theorem} \label{unique}
With weight function $f(x)=e^{cx}-1$, for any $\bb$ with at least two positive bids, each bidder's response curve has a single extremum over $[0,v]$, and it is a maximum. 
\end{theorem}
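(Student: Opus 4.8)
The plan is to work directly with the utility function $u(b) = (v-b)\,a(b)$ where $a(b) = f(b)/(f(b) + S)$ and $S = \sum_{j \neq i} f(b_j) > 0$ is a fixed positive constant (positive because at least one \emph{other} bidder has a positive bid; note we may assume the focal bidder is among the two positive bidders, or handle the boundary separately). First I would compute $u'(b)$ and observe that, since $a(b) > 0$ on $(0,v]$, the sign of $u'$ is governed by a more tractable expression. Writing $a' = f' S / (f+S)^2$, one gets
\be
u'(b) = \frac{S}{(f+S)^2}\Big[ (v-b) f' - f(b)\big(f(b)+S\big)/S \cdot \text{(something)} \Big],
\ee
so the cleaner route is to factor $u' = a(b)\cdot g(b)$ where $g(b) = (v-b)\frac{f'}{f} \cdot \frac{S}{f+S} - 1$, or equivalently to show that the stationarity condition $u'(b)=0$ is equivalent to $h(b) = 0$ for some function $h$ that is \emph{strictly decreasing} on $[0,v]$. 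If $h$ is strictly decreasing and changes sign exactly once, the extremum is unique; checking $h(0) > 0$ (utility initially increasing) and $h(v) < 0$ (utility decreasing as $b \to v$, since $v - b \to 0$) then forces that single extremum to be a maximum.

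The key computation is therefore to identify the right auxiliary function $h$ and prove its monotonicity. A natural candidate is to clear denominators in $u'(b) = 0$: stationarity is equivalent to
\be
(v - b) f'(b) S = f(b)\big(f(b) + S\big) - (v-b)f'(b)f(b),
\ee
wait — more carefully, $\frac{d}{db}\big[(v-b)a\big] = -a + (v-b)a'$, and setting this to zero gives $(v-b)a' = a$, i.e. $(v-b)\frac{f' S}{(f+S)^2} = \frac{f}{f+S}$, i.e. $(v-b) f' S = f(f+S)$. So define
\be
h(b) = (v-b) f'(b) S - f(b)\big(f(b) + S\big).
\ee
Then $u'(b)$ has the same sign as $h(b)$ on $(0,v]$. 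I would show $h$ is strictly decreasing by differentiating: $h'(b) = -f' S + (v-b) f'' S - 2 f f' - S f' = (v-b) f'' S - 2 f' S - 2 f f'$. For $f(x) = e^{cx} - 1$ we have $f' = c e^{cx} > 0$, $f'' = c^2 e^{cx} = c f' > 0$, so $h'(b) = S f'\big[ c(v-b) - 2 \big] - 2 f f'$. This is \emph{not} obviously negative when $c(v-b)$ is large — this is the main obstacle.

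The hard part will be handling the regime where $c(v-b) > 2$, so that the first term of $h'$ is positive. To deal with this I expect to use the sign information more cleverly rather than proving $h' < 0$ outright: it suffices to show $h$ has at most one zero on $[0,v]$, which follows if $h'(b) < 0$ \emph{at every $b$ where $h(b) = 0$} (a function that is strictly decreasing through each of its roots can have at most one root). At a root of $h$ we have $(v-b) f' S = f(f+S)$, so I can substitute $(v-b) f' S = f(f+S)$ into $h'(b) = (v-b)f'' S - 2f'S - 2ff'$; using $f'' = c f'$ gives $(v-b) f'' S = c (v-b) f' S = c f(f+S)$, hence at a root $h'(b) = c f(f+S) - 2 f' S - 2 f f' = c f^2 + c f S - 2 f' S - 2 f f'$. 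I would then bound $f' = c(f+1)$, so $2 f' S = 2c(f+1)S$ and $2 f f' = 2cf(f+1)$, giving $h'(b) = c\big[ f^2 + fS - 2(f+1)S - 2f(f+1)\big] = c\big[ f^2 + fS - 2fS - 2S - 2f^2 - 2f\big] = c\big[ -f^2 - fS - 2S - 2f\big] < 0$ since all terms in the bracket are negative (using $f(b) \ge 0$, $S > 0$, $c > 0$). Thus $h$ is strictly decreasing at each of its roots, so it has at most one root in $[0,v]$; combined with $h(0) = v f'(0) S = vcS > 0$ (assuming $v > 0$) and $h(v) = -f(v)(f(v)+S) < 0$, there is exactly one root, and since $u'$ changes from $+$ to $-$ there, it is the unique extremum and it is a maximum. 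The remaining loose end is the degenerate case where the focal bidder is one of only two positive bidders and the other positive bids sum to give the relevant $S$ — but as long as \emph{some} other bid is positive, $S > 0$ and the argument goes through; if the focal bidder's own bid is the only positive one the statement is about a different bidder, so WLOG $S>0$.
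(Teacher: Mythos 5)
Your proposal is correct, and its core is in substance the same argument the paper uses, though you make it self-contained. Your key step --- showing that $h(b) = (v-b)f'S - f(f+S)$ (which has the sign of $u'$) is strictly decreasing at each of its roots, by substituting the root condition $(v-b)f'S = f(f+S)$ and $f''=cf'$, $f'=c(f+1)$ --- is exactly equivalent to verifying that $u'=0$ implies $u''<0$, which is what the paper obtains by citing Lemma 5 of Wen et al.\ (the condition $ff'' < 2(f')^2$) and checking it for $f=e^{cx}-1$; indeed, if you carry out your substitution symbolically rather than for the exponential specifically, the inequality $h'<0$ at a root reduces precisely to $ff''<2(f')^2$. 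What your route buys is independence from the external lemma and an explicit sign analysis ($h(0)=vcS>0$, $h(v)<0$, $h$ decreasing through roots) that delivers existence and uniqueness of the interior maximum in one stroke, where the paper instead gets existence from continuity and the zero boundary values of $u$ and uniqueness from the no-interior-minimum argument. Your handling of the ``at least two positive bids'' hypothesis is also right: it guarantees $S>0$ for every focal bidder, which is all the argument needs (your closing sentence about ``the statement is about a different bidder'' is slightly garbled, but the conclusion $S>0$ is correct). The only cosmetic blemish is the abandoned first display with ``(something)'' in it, which you should simply delete in a final write-up.
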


\begin{proof}
We require at least two positive bids to ensure that no bidder's response curve has a zero denominator for the allocation. (With a single positive bid, that bidder's response curve would have a zero denominator at zero.) Note that utility is zero at bids 0 and $v$, positive in between, bounded, and continuous. So it has a maximum in $(0,v)$. 

Lemma 5 in the proof of Theorem 1 in \cite{wen15} states that if $f f'' < 2(f')^2$ then $(u'=0) \implies (u''<0)$. For us, $f = e^{cx}-1$, $f'=ce^{cx}$, and $f''=c^2 e^{cx}$. Substitute and simplify:
\be
f f'' < 2(f')^2.
\ee
\be
\left(e^{cx}-1\right) c^2 e^{cx} < 2 c^2 e^{2cx}.
\ee
\be
e^{cx}-1 < 2 e^{cx}.
\ee
\be
-1 < e^{cx}.
\ee
So $(u'=0) \implies (u''<0)$, meaning that every extremum is a local maximum, so there are no local minima in $(0,v)$. If there were multiple local maxima, then each successive pair would have a local minimum in between. So the lack of interior local minima implies at most one maximum. \qed
\end{proof}

Let $\br_i(\bb)$ be bidder $i$'s best response to bids $b_1, \ldots, b_{i-1}, b_{i+1}, \ldots, b_n$ from $\bb$, and let $\brbr$ be the vector of best responses $(\br_1(\bb), \ldots, \br_n(\bb))$. At pure-strategy Nash equilibrium bids $\bb^*$, $\brbr(\bb^*) = \bb^*$. Now we characterize bids at a pure-strategy Nash equilibrium. (Later we prove one exists.) 

\begin{theorem} \label{char}
At any pure-strategy Nash equilibrium $\bb^*$ with at least two positive bids,
\be
\forall i: b^*_i = v_i - \frac{1}{c} \left( 1 - \frac{1}{e^{cb^*_i}} \right) \left( \frac{1}{1-a_i(\bb^*)} \right).
\ee
\end{theorem}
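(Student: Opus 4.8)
The plan is to show that at any such equilibrium each bidder's bid is an \emph{interior} stationary point of their own response curve, and then to read the claimed identity directly off the first-order condition $u_i'(\bb^*)=0$.

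First I would pin down where $b_i^*$ sits. By hypothesis $\bb^*$ has at least two positive bids, so Theorem~\ref{unique} applies to bidder $i$: their response curve has a single extremum on $[0,v_i]$, and it is a maximum. The allocation numerator $f(b_i)=e^{cb_i}-1$ vanishes at $b_i=0$ and the factor $v_i-b_i$ vanishes at $b_i=v_i$, while in between the utility is strictly positive (the denominator $\sum_j f(b_j)$ stays positive because at least one other bid is positive). Hence that unique maximum is attained in the open interval $(0,v_i)$, so the best response --- and therefore $b_i^*$ --- lies strictly inside $(0,v_i)$ and satisfies $u_i'(\bb^*)=0$. In particular every component of $\bb^*$ is strictly positive, consistent with the hypothesis.

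Next I would expand $u_i'=0$. Writing $S=\sum_j f(b_j)$ and differentiating in $b_i$ (the terms $f(b_j)$, $j\neq i$, being constant) gives $a_i'=f'(b_i)(1-a_i)/S$, hence $u_i'=-a_i+(v_i-b_i)\,f'(b_i)(1-a_i)/S$. Setting this to zero at $\bb^*$ and cancelling the factor $(1-a_i(\bb^*))/S$, which is nonzero since some other bid is positive, yields $f(b_i^*)=(v_i-b_i^*)\,f'(b_i^*)\,\bigl(1-a_i(\bb^*)\bigr)$, i.e. $v_i-b_i^*=\frac{f(b_i^*)}{f'(b_i^*)}\cdot\frac{1}{1-a_i(\bb^*)}$. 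Finally, substituting $f(b_i^*)=e^{cb_i^*}-1$ and $f'(b_i^*)=c\,e^{cb_i^*}$ gives $\frac{f(b_i^*)}{f'(b_i^*)}=\frac{1}{c}\left(1-\frac{1}{e^{cb_i^*}}\right)$, and plugging this in rearranges to exactly the stated formula.

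I expect the only real obstacle is the first step: one must be sure the response-curve maximum is genuinely interior, so that differentiating and equating to zero is legitimate. That is exactly where Theorem~\ref{unique} together with the vanishing of $u_i$ at the two endpoints does the work; once $u_i'(\bb^*)=0$ is in hand, the remainder is routine differentiation and algebra.
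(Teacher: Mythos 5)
Your proposal is correct and follows essentially the same route as the paper: compute $a_i' = \frac{f'}{f+s}(1-a_i)$, impose the first-order condition $u_i'=0$, solve for $b_i^*$, and substitute $f=e^{cb}-1$, $f'=ce^{cb}$. The only addition is your explicit argument that the maximizer is interior (so the stationarity condition is legitimate), a point the paper handles implicitly via Theorem~\ref{unique}, whose proof already places the maximum in the open interval $(0,v)$.
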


\begin{proof}
Focus on a single bidder. Since 
$$ u = (v-b)a, $$
$$ u' = -a + (v-b)a'. $$
Since 
$$ a = \frac{f}{f+s}, $$
where $s = \sum_{j \not= i} f(b_j)$ is the sum of other bidders' weighted bids,
$$ a' = \frac{f'}{f+s} - \frac{f f'}{(f+s)^2} $$
$$ = \frac{f'}{f+s} (1-a). $$
So 
$$ u' = -a + (v-b)(1-a)\frac{f'}{f+s}. $$
Set $u'=0$ and solve for $b$ to find the best response.
$$ b = v - \frac{a}{1-a} \frac{(f+s)}{f'}. $$
Since $a = \frac{f}{f+s}$,
$$ b = v - \frac{f}{f'} \left( \frac{1}{1-a} \right). $$ \label{eq_char}
Since $f = e^{cb}-1$ and $f' = c e^{cb}$,
$$ b = v - \frac{1}{c} \left( 1 - \frac{1}{e^{cb}} \right) \left( \frac{1}{1-a} \right). \qed $$
\end{proof}

This characterization is not a closed-form solution, because $e^{cb^*_i}$ depends on $b^*_i$ and $a_i$ depends on $\bb^*$. However, it will give some insight about equilibrium, and we will extend this theorem to get bounds for bids at equilibrium and prove that one exists. 

For some insight on equilibrium for exponential weight functions, we will compare to an equilibrium characterization for power weight functions. Substitute $f=b^p$ and $f'=p b^{p-1}$ into Equality \ref{eq_char}. 
$$ b = v - \frac{b}{p} \left( \frac{1}{1-a} \right). $$
Solve for $b$:
$$ b = \frac{v}{1 + \frac{1}{p} \left( \frac{1}{1-a} \right)}. $$
So for any pure-strategy Nash equilibrium $\bb^*$ with at least two positive bids, 
$$ \forall i: b^*_i = \frac{v_i}{1 + \frac{1}{p} \left( \frac{1}{1-a_i(\bb^*)} \right)}. $$

Compare this to Theorem \ref{char}. At a high level, equilibrium bids for exponential weight functions subtract an amount from the private value, but equilibrium bids for power weight functions divide the private value by one plus an amount. The amounts have similar forms. They share the term $\frac{1}{1-a}$, which decreases equilibrium bids as the bidder's share of the equilibrium allocation increases. The terms $\frac{1}{c}$ and $\frac{1}{p}$ increase equilibrium bids as steeper weight functions are selected. For both exponential and power weight functions, the steepness parameter ($c$ or $p$) mediates a tradeoff: a steeper weight function increases bids via $\frac{1}{c}$ or $\frac{1}{p}$ but decreases the highest bidder's equilibrium bid by increasing their share of the allocation, which increases $\frac{1}{1-a}$. Balancing these effects maximizes revenue. For both exponential and power weight functions, increasing competition -- by having closer private values among top bidders, by having more bidders, or both -- allows the auctioneer to increase the steepness parameter without allowing the bidder with the highest private value to submit a low bid relative to their private value and still capture the lion's share of the allocation. As a result, increasing competition increases the steepness of the revenue-maximizing weight function and increases the equilibrium revenue for that weight function.

The next theorem characterizes lower bounds $\ww$ for bids at an equilibrium.

\begin{theorem} \label{bound}
For $f(x)=e^{cx}-1$, if $\ww$ satisfies:
\be
\forall i: w_i \leq v_i - \frac{1}{c} \left(1 - \frac{1}{e^{cw_i}}\right)\left(\frac{1}{1-a_i(\ww)}\right),
\ee
where $a_i(\ww)$ is the allocation to bidder $i$ if $\bb = \ww$, then $(\forall i: b_i \geq w_i) \implies (\forall i: \br_i(\bb) \geq w_i)$.
\end{theorem}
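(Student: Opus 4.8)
The plan is to fix a bidder $i$ and an arbitrary $\bb$ with $b_j \ge w_j$ for all $j$, and to show that bidder $i$'s response curve is nondecreasing at the bid $w_i$. By Theorem~\ref{unique} that response curve has a single extremum over $[0,v_i]$ and it is a maximum, so it is increasing on $[0,\br_i(\bb)]$ and decreasing on $[\br_i(\bb),v_i]$; hence a nonnegative derivative at $w_i$ forces $\br_i(\bb) \ge w_i$, which is the conclusion. To invoke Theorem~\ref{unique} we need $\bb$ to have at least two positive bids; this is automatic, since the hypothesis is only meaningful when $a_i(\ww) < 1$ for every $i$, which requires $\ww$ — and hence $\bb$, via $b_j \ge w_j$ — to have at least two positive components. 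We may also assume $w_i > 0$, since otherwise $\br_i(\bb) \ge 0 \ge w_i$ is trivial; then the hypothesis forces $w_i < v_i$, so $w_i$ lies in the open interval where the response curve is differentiable.

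Next I would reuse the derivative computed in the proof of Theorem~\ref{char}. Writing $\hat a(b) = f(b)\big/\big(f(b) + \sum_{j\neq i} f(b_j)\big)$ for bidder $i$'s allocation when it bids $b$ against the fixed bids $\{b_j\}_{j\neq i}$, we have $u' = \hat a(b)\big[-1 + (v_i - b)(1 - \hat a(b))\,f'/f\big]$. At $b = w_i > 0$ the factor $\hat a(w_i)$ is positive, so the sign of $u'(w_i)$ equals the sign of the bracket, and rearranging (using $f/f' = \tfrac1c(1 - e^{-cb})$ exactly as in Theorem~\ref{char}) gives
$$ u'(w_i) \ge 0 \iff w_i \le v_i - \frac{1}{c}\left(1 - \frac{1}{e^{cw_i}}\right)\left(\frac{1}{1 - \hat a(w_i)}\right). $$

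The one genuinely substantive step — and the main obstacle — is bridging from $\hat a(w_i)$, the allocation bidder $i$ gets bidding $w_i$ against the possibly larger bids $b_j$, to $a_i(\ww)$, the allocation it gets when every bidder bids $w_j$. Since $f$ is strictly increasing and $b_j \ge w_j$ for $j \neq i$, we get $\sum_{j\neq i} f(b_j) \ge \sum_{j\neq i} f(w_j)$, and because $t \mapsto A/(A+t)$ is decreasing this yields $\hat a(w_i) \le a_i(\ww)$, hence $\tfrac{1}{1-\hat a(w_i)} \le \tfrac{1}{1-a_i(\ww)}$. Multiplying by the nonnegative quantity $\tfrac1c(1 - e^{-cw_i})$ and subtracting from $v_i$,
$$ v_i - \frac{1}{c}\left(1 - \frac{1}{e^{cw_i}}\right)\left(\frac{1}{1 - \hat a(w_i)}\right) \;\ge\; v_i - \frac{1}{c}\left(1 - \frac{1}{e^{cw_i}}\right)\left(\frac{1}{1 - a_i(\ww)}\right) \;\ge\; w_i, $$
where the last inequality is the hypothesis on $\ww$. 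By the equivalence above this is exactly $u'(w_i) \ge 0$, so $\br_i(\bb) \ge w_i$; since $i$ was arbitrary, $\forall i : \br_i(\bb) \ge w_i$. Everything beyond the $\hat a(w_i)$-versus-$a_i(\ww)$ monotonicity reuses the derivative computation of Theorem~\ref{char} and the single-maximum structure of Theorem~\ref{unique}, so the only real care needed is tracking that the inequality directions line up to preserve the hypothesis.
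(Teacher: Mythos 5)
Your proposal is correct and follows essentially the same route as the paper's own proof: reduce to showing $u'\geq 0$ at $b=w_i$ via the single-maximum property of Theorem~\ref{unique}, reuse the derivative formula from Theorem~\ref{char}, and bridge from the allocation against $\bb$ to $a_i(\ww)$ using the monotonicity $a(w_i,\bb)\leq a_i(\ww)$. The only difference is that you spell out the edge cases (at least two positive bids, $w_i\leq 0$, the sign of the factored derivative) more explicitly than the paper does, which is fine.
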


\begin{proof}
Focus on a single bidder $i$. Assume $\forall j \not= i: b_j \geq w_j$. Let $u$ be bidder $i$'s utility function given other bidders' bids, and let $b$ be bidder $i$'s response bid. By Theorem \ref{unique}, $u$ has a single local maximum, $u' < 0$ before the maximum, and $u' > 0$ after the maximum. So if $u' \geq 0$ at $b=w$, then the best response is $w$ or greater. 

From the proof of Theorem \ref{char}, solve for $u'\geq0$ at $b=w$ rather than $u'=0$ at $b$:
\be
w \leq v - \frac{1}{c} \left(1 - \frac{1}{e^{cw}}\right) \left(\frac{1}{1-a(w,\bb)}\right),
\ee
where $a(w,\bb)$ is the allocation to bidder $i$ if bidder $i$ bids $w$ and the other bidders maintain their bids from vector $\bb$. Let $a(\ww)$ be the allocation to bidder $i$ if $\bb=\ww$. Since we assume $\forall j \not= i: b_j \geq w_j$, $a_i(\ww) \geq a(w,\bb)$. (Decreasing competitors' bids increases bidder $i$'s allocation.) So
\be
v - \frac{1}{c} \left(1 - \frac{1}{e^{cw}}\right) \left(\frac{1}{1-a(\ww)}\right) \leq v - \frac{1}{c} \left(1 - \frac{1}{e^{cw}}\right) \left(\frac{1}{1-a(w,\bb)}\right),
\ee
and 
\be
w \leq v - \frac{1}{c} \left(1 - \frac{1}{e^{cw}}\right) \left(\frac{1}{1-a(\ww)}\right). \qed
\ee
\end{proof}

Now we show that a pure-strategy Nash equilibrium exists.

\begin{theorem} \label{exist}
Suppose $\ww$ satisfies:
$$ \forall i: w_i \leq v_i - \frac{1}{c} \left(1 - \frac{1}{e^{cw_i}}\right)\left(\frac{1}{1-a_i(\ww)}\right). $$
Then 
$$ \exists \bb^* \in \setwv: \brbr(\bb^*) = \bb^*. $$
\end{theorem}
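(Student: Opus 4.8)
The plan is to realize $\bb^*$ as a fixed point of the best-response map $\brbr$ and obtain it from Brouwer's fixed-point theorem. Concretely, I would show that $\brbr$ maps the box $\setwv$ continuously into itself; since that box is nonempty, compact, and convex, Brouwer then supplies a point $\bb^* \in \setwv$ with $\brbr(\bb^*) = \bb^*$, which is by definition a pure-strategy Nash equilibrium.

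Several routine checks set up the domain and the self-mapping property. Taking $w_i \geq 0$ (bids and the weights $e^{cw_i}-1$ are nonnegative) and using $a_i(\ww) < 1$, the hypothesised inequality forces $0 \leq w_i \leq v_i$, so $\setwv$ is a nonempty product of closed intervals, hence compact and convex. I would also adopt the standing assumption that at least two of the $w_i$ are positive, so that every $\bb \in \setwv$ carries at least two positive bids and Theorems \ref{unique} and \ref{bound} apply throughout. For the self-mapping: for any $\bb \in \setwv$ and any bidder $i$, utility $(v_i-b_i)a_i$ is zero at $b_i = v_i$, positive just below it, and negative above it, so $\br_i(\bb) \in (0,v_i)$; and since $b_j \geq w_j$ for every $j \neq i$, Theorem \ref{bound} gives $\br_i(\bb) \geq w_i$. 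Hence $\br_i(\bb) \in [w_i, v_i)$ and $\brbr(\bb) \in \setwv$.

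The step I expect to be the crux is continuity of $\brbr$. By Theorem \ref{unique}, for each fixed opponent profile $\bb_{-i}$ the response curve has a unique maximizer, and it is the unique zero of $u_i'(\cdot, \bb_{-i})$ in $(0, v_i)$; moreover $u_i'' < 0$ at that zero, as shown in the proof of Theorem \ref{unique}. Because there are at least two positive bids, the denominator $f(b_i)+s$ never vanishes, so $u_i'$ is jointly $C^1$ in $(b_i, \bb_{-i})$, and the implicit function theorem makes the maximizer --- that is, $\br_i(\bb)$ --- a continuous (indeed $C^1$) function of $\bb_{-i}$, hence of $\bb$. Equivalently, one may quote Berge's maximum theorem: $u_i$ is jointly continuous on the fixed feasible box, so the argmax correspondence is upper hemicontinuous, and by Theorems \ref{unique} and \ref{bound} it is single-valued, hence a continuous function. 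With $\brbr$ continuous, Brouwer's theorem finishes the proof. The delicate points are essentially this continuity argument together with the bookkeeping that keeps the relevant denominators bounded away from zero, both of which reduce to maintaining at least two positive bids over all of $\setwv$.
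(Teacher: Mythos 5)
Your proposal follows the paper's own route exactly: Brouwer's fixed-point theorem on the box $\setwv$, with Theorem \ref{bound} supplying the self-mapping property. In fact you are more careful than the paper, which asserts continuity of $\brbr$ without argument, whereas you justify it (via the implicit function theorem or Berge's theorem plus the uniqueness from Theorem \ref{unique}) and also make explicit the needed bookkeeping --- nonemptiness, compactness, convexity of the box, $\br_i(\bb)<v_i$, and the standing assumption of at least two positive $w_i$.
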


\begin{proof}
According to Brouwer's fixed point theorem \cite{brouwer12,franklin02}, a function has a fixed point in a convex compact set if the function is continuous and maps the set into itself. Theorem \ref{bound} shows that the function $\brbr$ maps $\setwv$ into itself. \qed
\end{proof}

Having shown that there is an equilibrium, we can use Theorem \ref{bound} to derive bounds for bids at that equilibrium. For example,

\begin{corollary}
Without loss of generality, let $v_1 \geq \ldots \geq v_n$. Let
$$ w_1 = v_2 - \frac{2}{c}, $$
and 
$$ \forall i \geq 2: w_i = v_i - \frac{2}{c}. $$
Then $(\forall i: b_i \geq w_i) \implies (\forall i: \br_i(\bb) \geq w_i)$. 
\end{corollary}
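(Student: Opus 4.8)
\section*{Proof proposal}

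The plan is to reduce the statement entirely to Theorem~\ref{bound}: once we check that this particular $\ww$ satisfies the hypothesis of that theorem, namely $w_i \le v_i - \frac{1}{c}\pp{1 - \frac{1}{e^{cw_i}}}\pp{\frac{1}{1-a_i(\ww)}}$ for every $i$, the desired conclusion $(\forall i: b_i \ge w_i) \implies (\forall i: \br_i(\bb) \ge w_i)$ follows immediately. So the whole task is to verify those $n$ inequalities for the given $\ww$.

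The key observation is about the allocations $a_i(\ww)$. Because the $v_i$ are sorted in decreasing order, $w_1 = v_2 - \frac{2}{c} = w_2$, and for every $i \ge 3$ we have $w_i = v_i - \frac{2}{c} \le v_2 - \frac{2}{c} = w_1$. Thus $w_1$ and $w_2$ are jointly the largest among $w_1,\ldots,w_n$, so since $f(x)=e^{cx}-1$ is increasing, $f(w_i) \le f(w_1) = f(w_2)$ for every $i$; hence $\sum_j f(w_j) \ge f(w_1) + f(w_2) = 2 f(w_1) \ge 2 f(w_i)$, which gives $a_i(\ww) = f(w_i)/\sum_j f(w_j) \le \frac{1}{2}$ when $f(w_i) \ge 0$ (and $a_i(\ww) \le 0$ otherwise). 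Either way $a_i(\ww) \le \frac{1}{2}$, so $\frac{1}{1-a_i(\ww)} \le 2$.

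Now substitute this bound. Since $e^{cw_i} > 0$, we have $1 - \frac{1}{e^{cw_i}} \le 1$ for every real $w_i$, so the subtracted term $\frac{1}{c}\pp{1 - \frac{1}{e^{cw_i}}}\pp{\frac{1}{1-a_i(\ww)}}$ is at most $\frac{2}{c}$. For $i \ge 2$ this yields $v_i - \frac{1}{c}\pp{1 - \frac{1}{e^{cw_i}}}\pp{\frac{1}{1-a_i(\ww)}} \ge v_i - \frac{2}{c} = w_i$, as required. For $i = 1$ it yields $v_1 - \frac{1}{c}\pp{1 - \frac{1}{e^{cw_1}}}\pp{\frac{1}{1-a_1(\ww)}} \ge v_1 - \frac{2}{c} \ge v_2 - \frac{2}{c} = w_1$, where the last step uses $v_1 \ge v_2$. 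This verifies the hypothesis of Theorem~\ref{bound} and completes the proof.

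The only real content is the allocation bound $a_i(\ww) \le \frac{1}{2}$, which holds precisely because the construction ties $w_1$ to $w_2$, so no bidder can hold more than half the total weight at $\ww$; the rest is bookkeeping built around the deliberate gap $v_i - w_i = \frac{2}{c}$. The one point to watch is signs: if some $w_i < 0$ (so $f(w_i) < 0$), the step ``$\sum_j f(w_j) \ge 2 f(w_i)$'' should be replaced by the trivial bound $a_i(\ww) \le 0$, and one should note that the constraint $b_i \ge w_i$ is vacuous there; similarly $1 - \frac{1}{e^{cw_i}} \le 1$ must be used in its form valid for all real $w_i$, not only for $w_i \ge 0$.
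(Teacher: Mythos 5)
Your proposal is correct and follows essentially the same route as the paper: verify the hypothesis of Theorem~\ref{bound} via the allocation bound $a_i(\ww) \le \tfrac{1}{2}$ (forced by tying $w_1$ to $w_2$), bound the subtracted term by $\tfrac{2}{c}$ using $1 - e^{-cw_i} \le 1$, and use $v_1 \ge v_2$ for the $i=1$ case. The only difference is your explicit caveat about possibly negative $w_i$, a corner case the paper's proof silently ignores.
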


\begin{proof}
We will show that the $\ww$ in the corollary meets the conditions of Theorem \ref{bound}. For $w_1$, since $w_1=w_2$, $a_1(\ww) = a_2(\ww)$, so $a_1(\ww) \leq \frac{1}{2}.$ So
$$ w_1 = v_2 - \frac{2}{c} \leq v_1 - \frac{1}{c} \left( \frac{1}{1 - a_1(\ww)} \right) \leq v_1 - \frac{1}{c} \left(1 - \frac{1}{e^{cw_1}}\right)\left(\frac{1}{1-a_1(\ww)}\right). $$
For $i \geq 2$, $w_i \leq w_{i-1}$, so $a_i(\ww) \leq \frac{1}{2}$. Hence
$$ w_i = v_i - \frac{2}{c} \leq v_i - \frac{1}{c} \left( \frac{1}{1 - a_i(\ww)} \right) \leq v_1 - \frac{1}{c} \left(1 - \frac{1}{e^{cw_i}}\right)\left(\frac{1}{1-a_i(\ww)}\right). \qed $$
\end{proof}

\section{Tests}
This section presents test results showing that exponential weight functions produce more revenue than power weight functions when there is even a modest level of competition among bidders. The tests include a two-bidder scenario ($n = 2$) and a ten-bidder scenario ($n = 10$.) For the ten-bidder scenario, there is a single bidder with a higher private valuation, and the others have equal private valuations. This scenario is called OLOS (one large and others the same) in \cite{wen15}. 

For each scenario, we set $v_1 \geq v_2 = \ldots = v_n$, and define $\alpha = \frac{v_1}{v_2}$. For each $\alpha \in \left\{1.2, 1.4, \ldots, 10.0 \right\} \cup \left\{20, 30, \ldots, 100 \right\}$, we ran tests for a range of $c$ and $p$ values to determine which values of these steepness parameters maximized auction revenue for exponential and power weight functions, respectively. In each test, we set $v_1 = \alpha$ and $v_2 = \ldots = v_n = 1$, initialized all bids to $\frac{1}{2}$, then for 100 iterations used the best response for each bidder to the other bidders' previous bids as the bidder's next bid. In other words, we set $\bb^0 = (\frac{1}{2}, \ldots, \frac{1}{2})$, then repeated $\bb^{i+1} = \brbr(\bb^i)$ one hundred times to get final bids $\bb^{100}$. We used the final bids to determine revenue for the scenario and $c$ or $p$ value. 

Figure \ref{rev_steep} displays the maximum revenue and revenue-maximizing $c$ and $p$ values for the two-bidder scenario. Figures \ref{sa_rev} and \ref{ma_rev} show that the best exponential weight function produces more revenue than the best power weight function for $\alpha < 70$. In other words, with two bidders, the best exponential weight function generates more auction revenue than the best power weight function unless one bidder's private value is at least 70 times the other's. Figures \ref{sa_steep} and \ref{ma_steep} show that very steep weight functions maximize revenue when bidders' private values are nearly equal, but the revenue-maximizing parameter values decrease as competition decreases. In other words, when there is stiff competition, the auctioneer can run something close to a winner-take-all auction using a steep weight function, but when there is weak competition, the auctioneer must use a flatter weight function to threaten the high-value bidder with having to share the allocation with the low-value bidder. For strong or moderate competition, exponential weight functions provide something closer to a winner-take-all auction because of their steeper shape, while flatter power weight functions provide a stronger threat to share the allocation almost equally among bidders when there is very weak competition. 

Figure \ref{alloc_bids} shows the allocation to the high-value bidder and the bids for the revenue-maximizing $c$ and $p$ values with $n = 2$. Exponential and power weight functions award similar allocations for strong competition (low $\alpha$), but exponential weight functions allocate more to the high bidder as competition decreases. This is consistent with power weight functions threatening, and partially achieving, more sharing under weak competition. Exponential weight functions result in the high and low bidder having closer bids under strong competition, but this reverses as competition weakens, with equally close bids at about $\alpha = 50$. Steeper weight functions force the bidder with lower private value to bid higher in order to get more than a very small portion of the allocation; this, in turn, can drive the bidder with higher private value to bid higher. As weight functions get less steep (with increasing $\alpha$), the lower bids actually decrease, because the lower bidder can obtain a sizable portion of the allocation even with a low bid. 

Figure \ref{olos_rev_steep} shows revenue and optimal steepness results for the $n = 10$ scenario, which has more competition than the $n = 2$ scenario, in the form of more bidders. Figures \ref{osa_rev} and \ref{oma_rev} show that, under this increased competition, the difference between exponential and power weight function revenue increases, and exponential weight functions continue to produce more revenue than power weight functions even for $\alpha = 100$. (The crossover is between 100 and 200 -- not shown on these figures.) Figures \ref{osa_steep} and \ref{oma_steep} show that optimal weight functions are flatter as competition decreases (as $\alpha$ increases). 

Figure \ref{olos_alloc_bids} shows allocations and bids for revenue-maximizing weight functions with $n=10$. Figures \ref{osa_alloc} and \ref{oma_alloc} show that the allocation to the highest bidder decreases gradually as $\alpha$ increases for power weight functions. For exponential weight functions, the allocation is close to the allocation for power weight functions until $\alpha>10$. Then the allocation for power weight functions decreases as $\alpha$ increases. In contrast, for exponential weight functions, the allocation increases, giving more allocation to the highest bidder as competition weakens. Figures \ref{osa_bids} and \ref{oma_bids} show that, as for $n=2$, exponential weight functions produce more of a spread between bids under more competition, and this reverses for $\alpha > 70$.

\begin{figure}[ht]
\subfigure[]{\includegraphics[width = 0.48\textwidth]{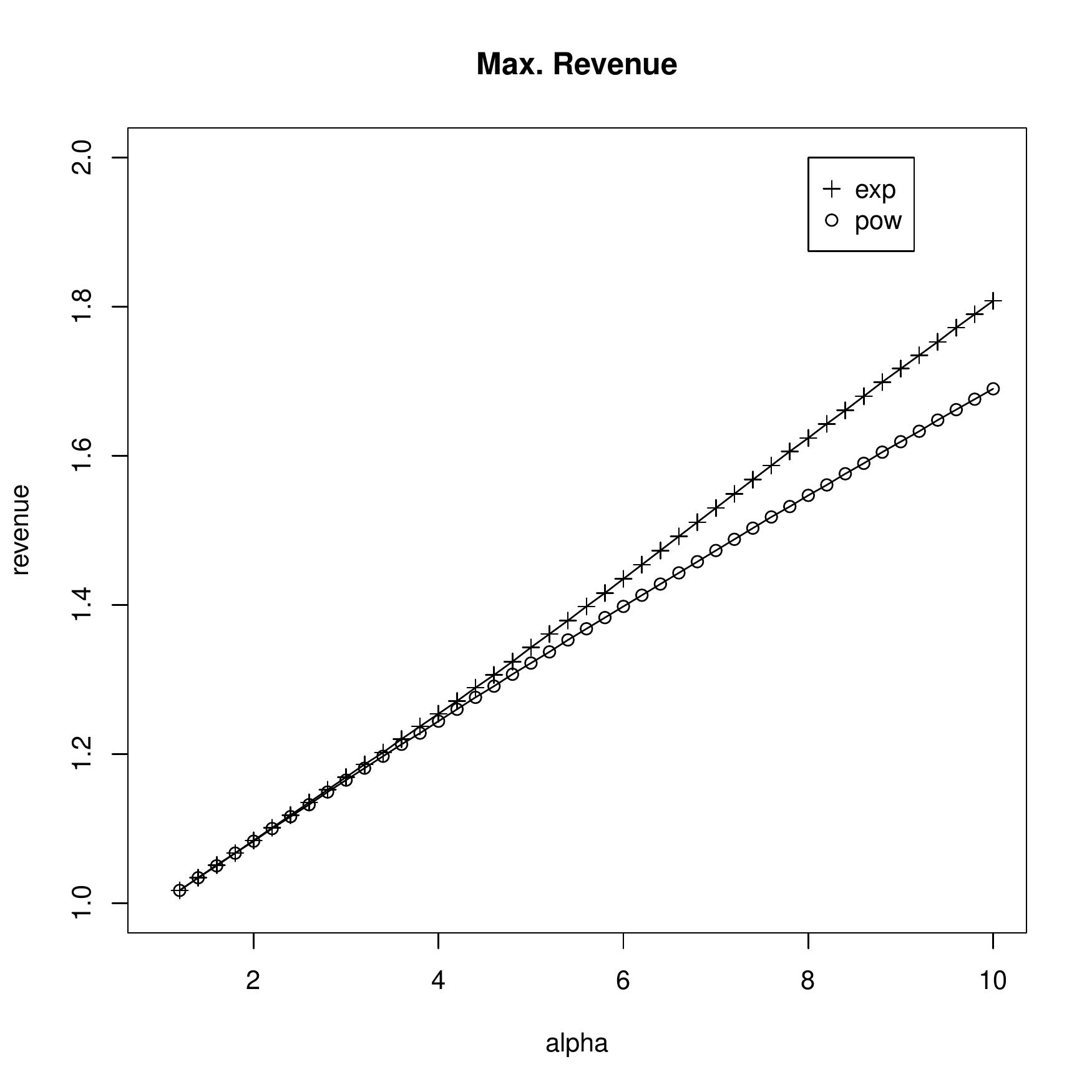} \label{sa_rev}}
\subfigure[]{\includegraphics[width = 0.48\textwidth]{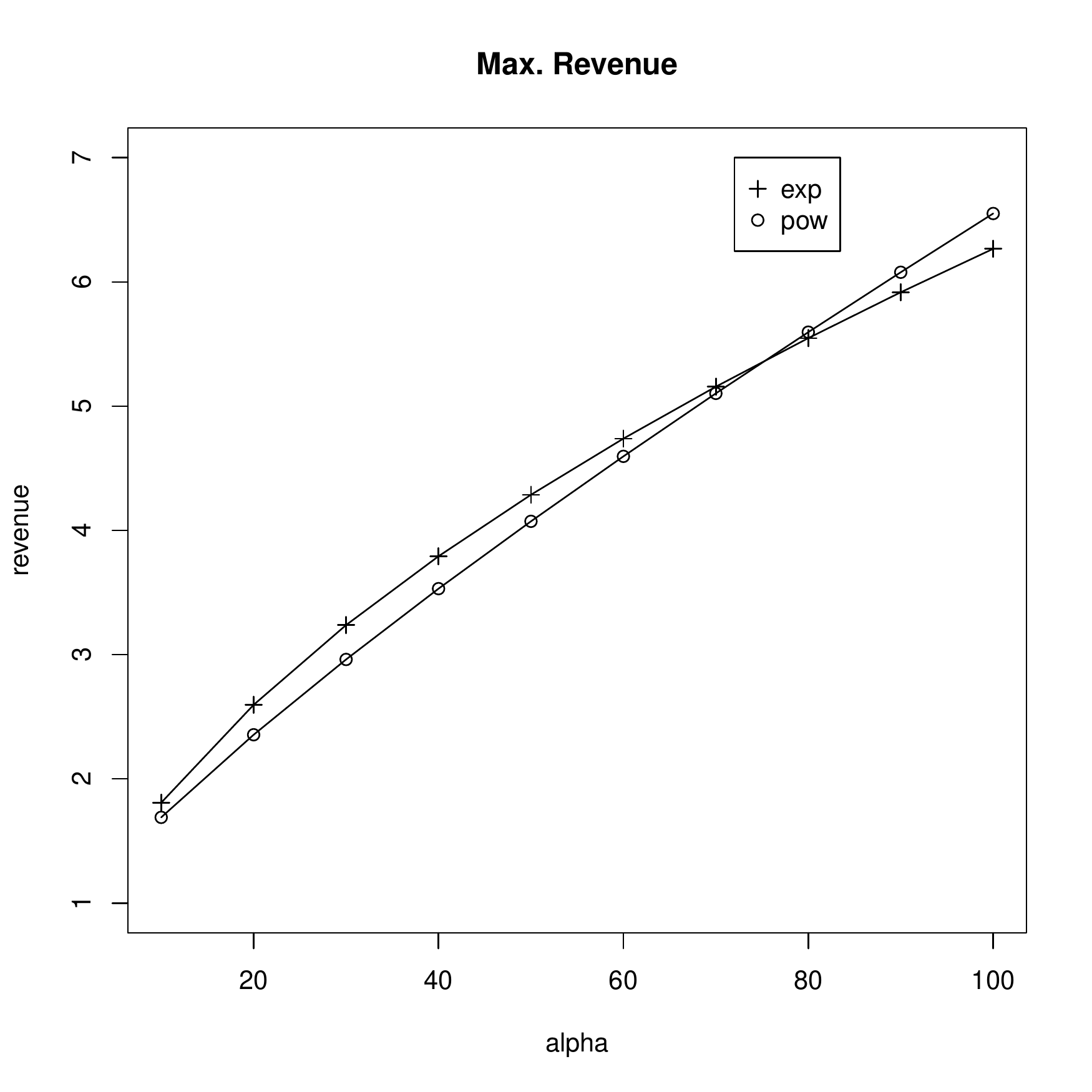} \label{ma_rev}}
\subfigure[]{\includegraphics[width = 0.48\textwidth]{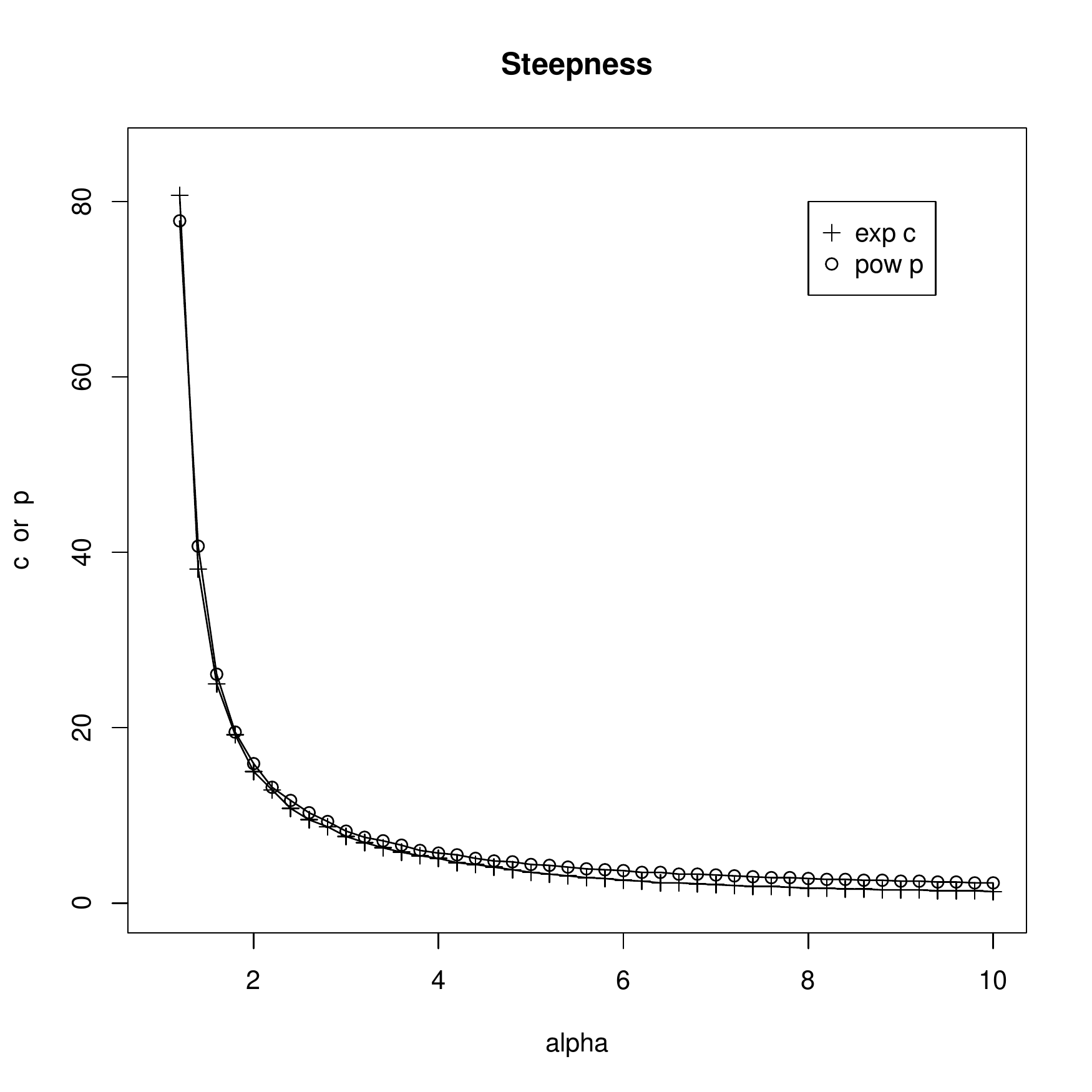} \label{sa_steep}}
\subfigure[]{\includegraphics[width = 0.48\textwidth]{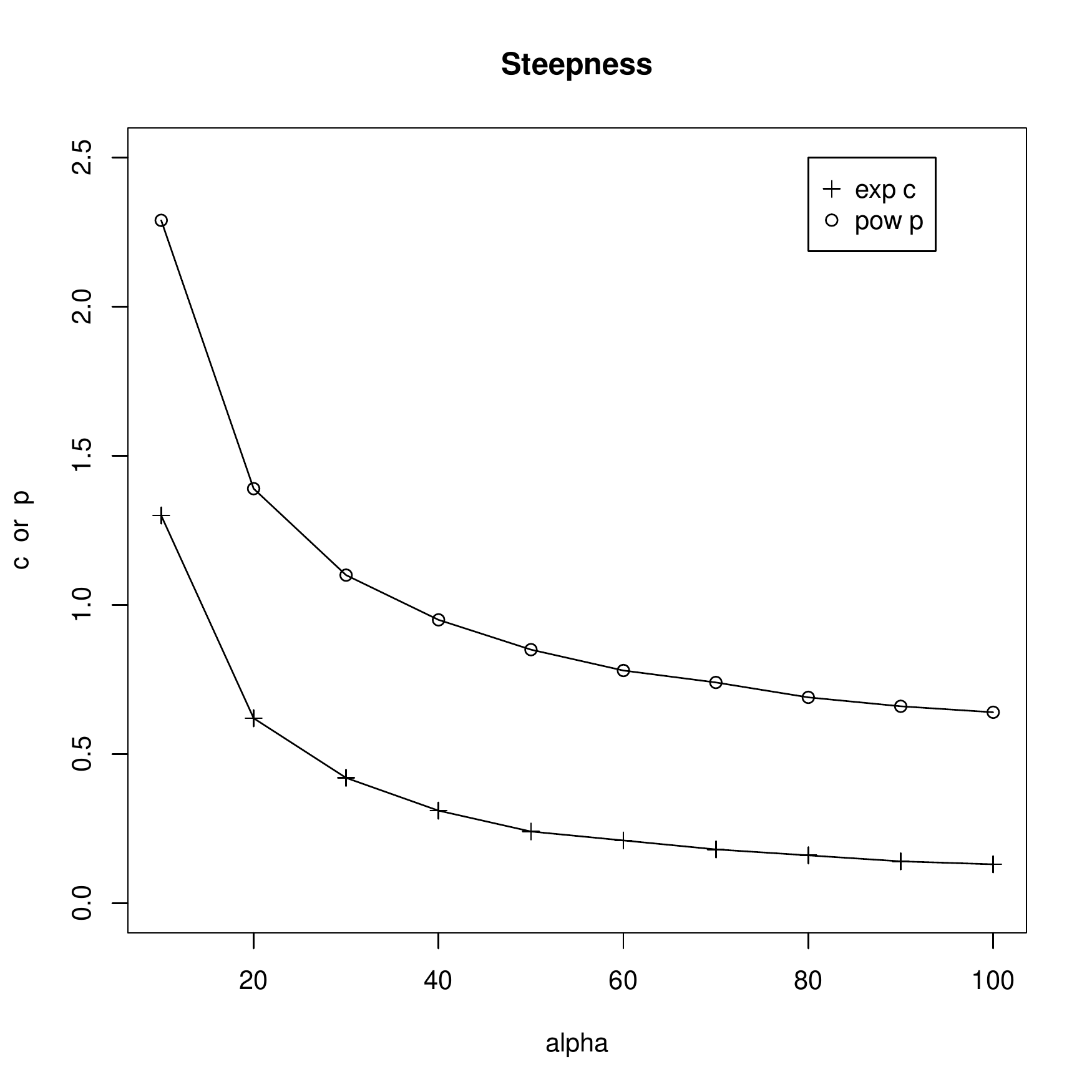} \label{ma_steep}}
\caption{Maximum Revenue and Maximimizing Steepness Parameters for $n = 2$.}
\label{rev_steep}
\end{figure}

\begin{figure}[ht]
\subfigure[]{\includegraphics[width = 0.48\textwidth]{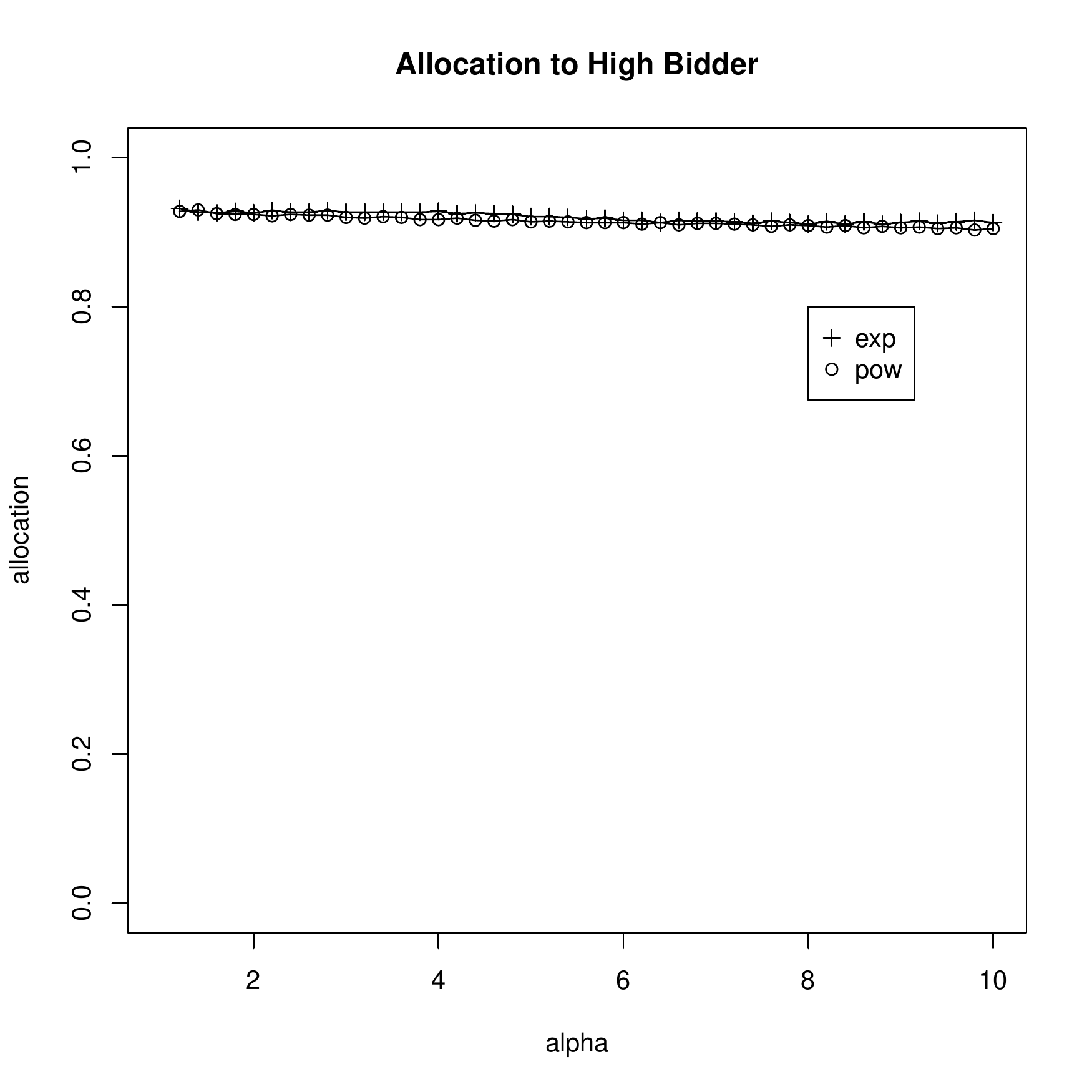} \label{sa_alloc}}
\subfigure[]{\includegraphics[width = 0.48\textwidth]{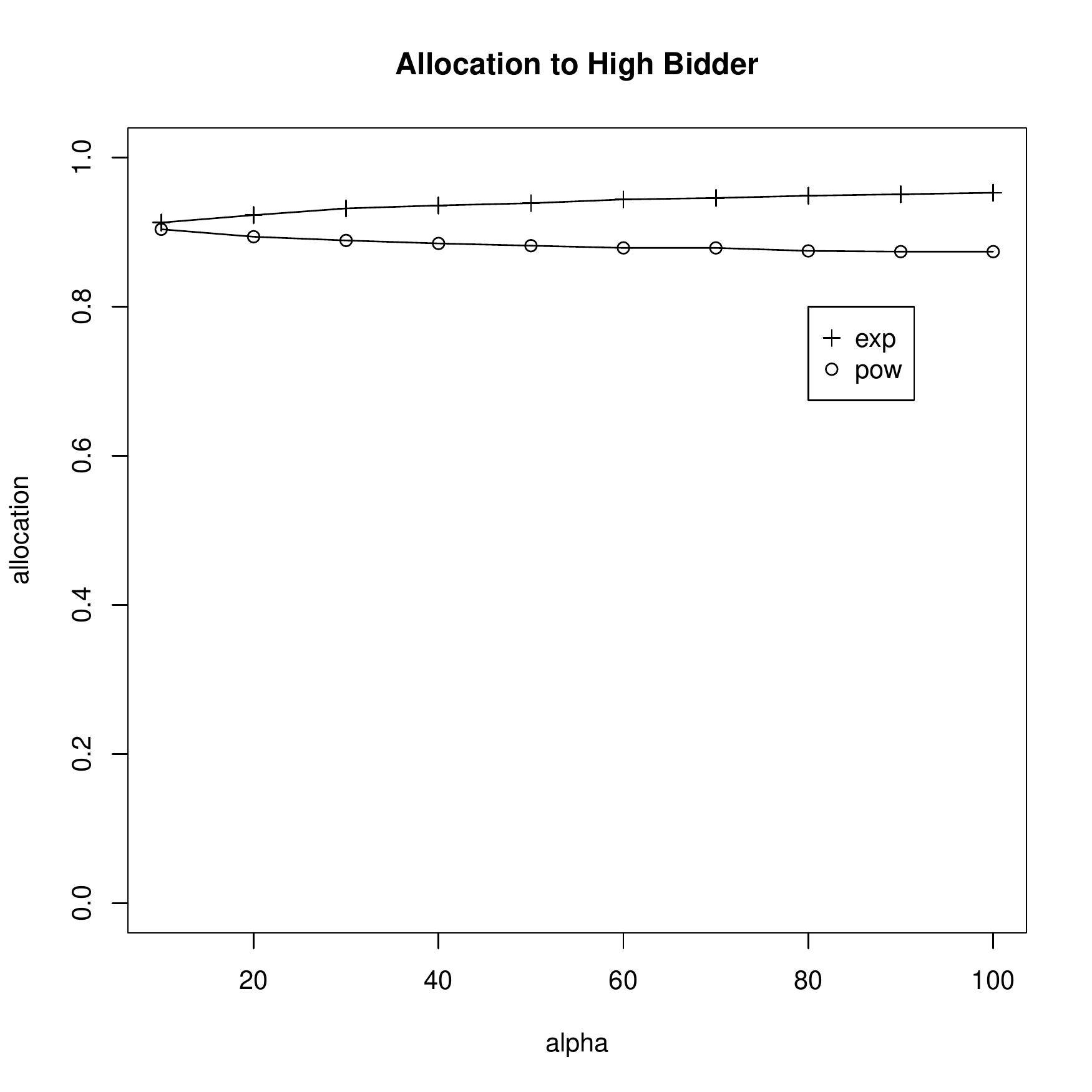} \label{ma_alloc}}
\subfigure[]{\includegraphics[width = 0.48\textwidth]{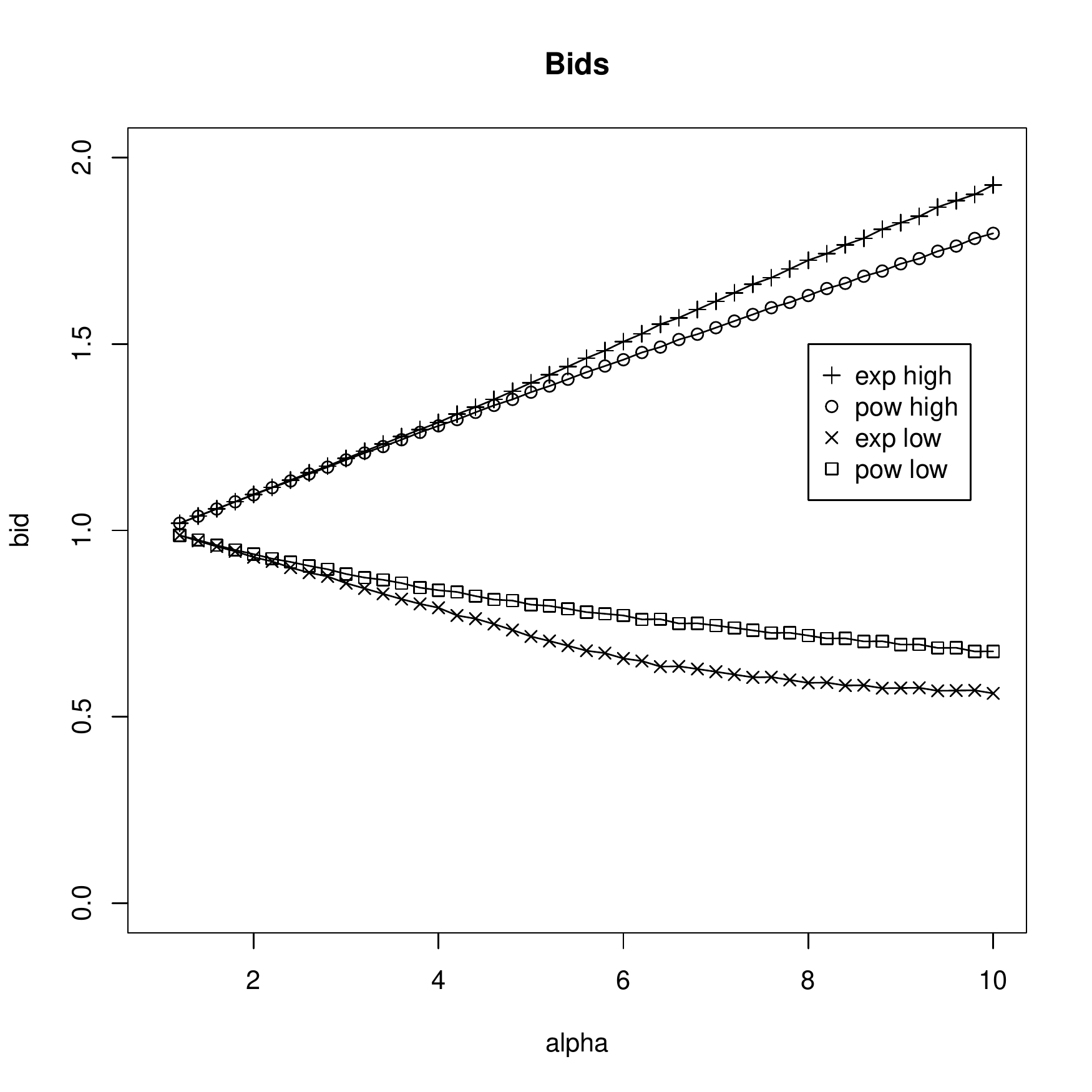} \label{sa_bids}}
\subfigure[]{\includegraphics[width = 0.48\textwidth]{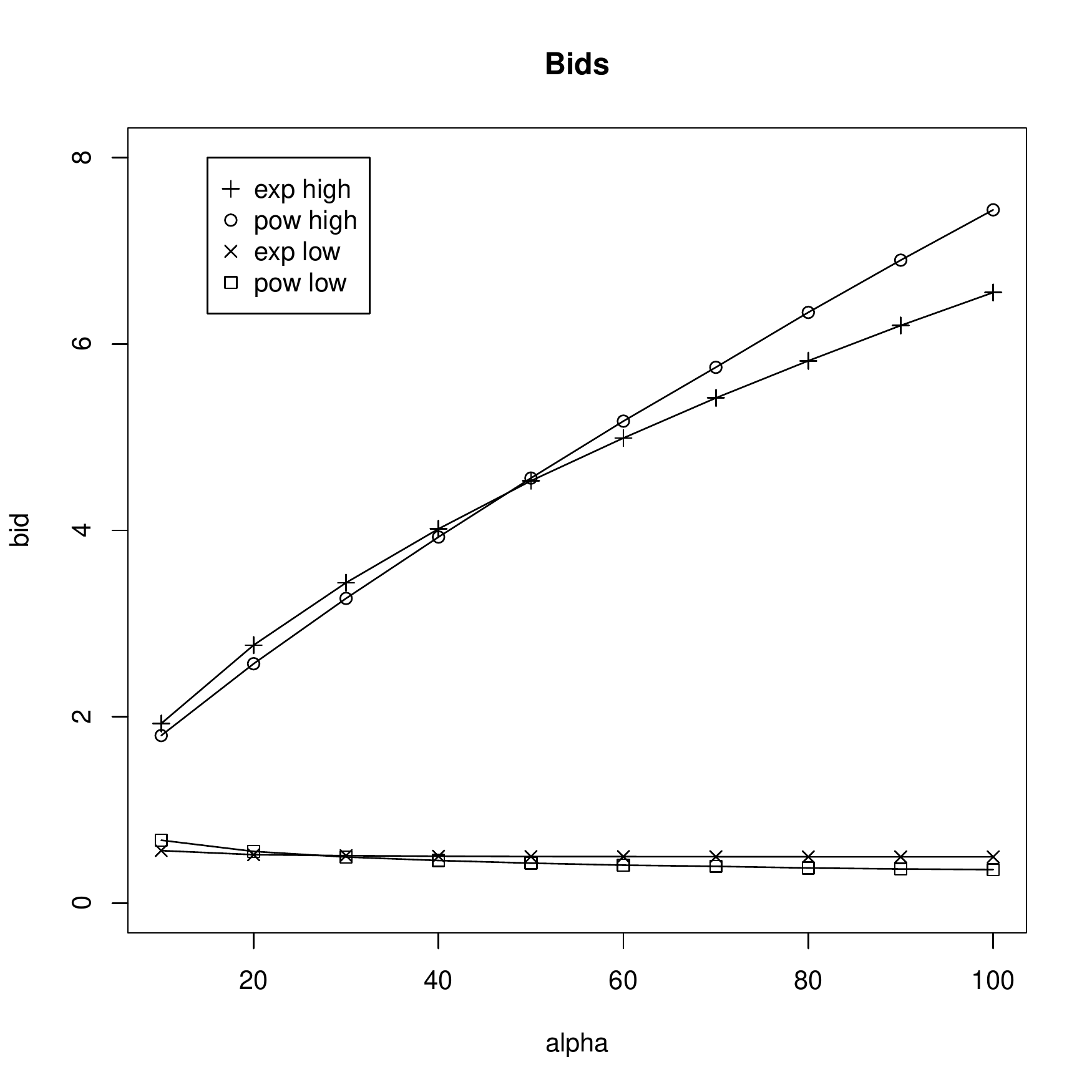} \label{ma_bids}}
\caption{Allocation to High Bidder and Bids at Maximum Revenue for $n = 2$.}
\label{alloc_bids}
\end{figure}

\begin{figure}[ht]
\subfigure[]{\includegraphics[width = 0.48\textwidth]{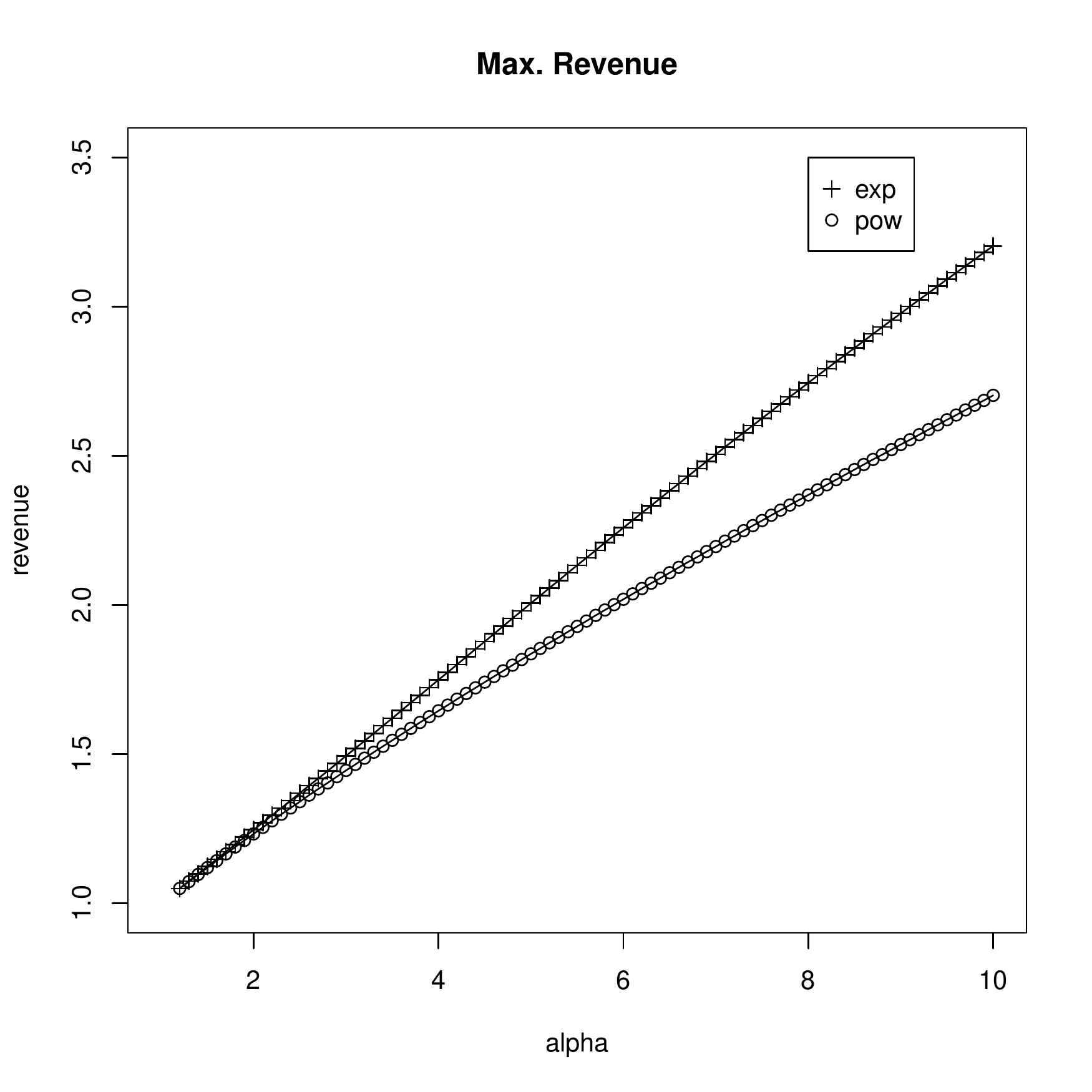} \label{osa_rev}}
\subfigure[]{\includegraphics[width = 0.48\textwidth]{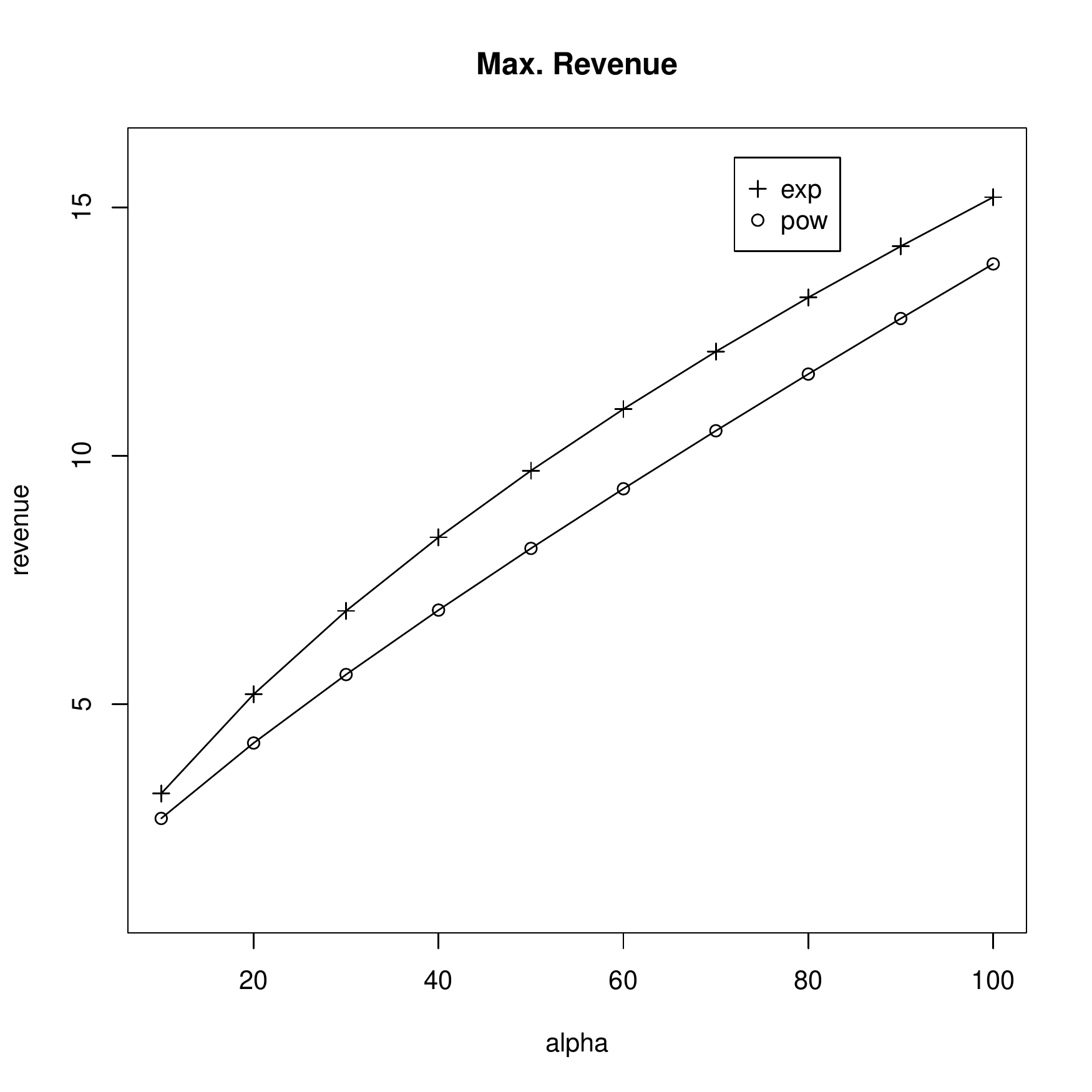} \label{oma_rev}}
\subfigure[]{\includegraphics[width = 0.48\textwidth]{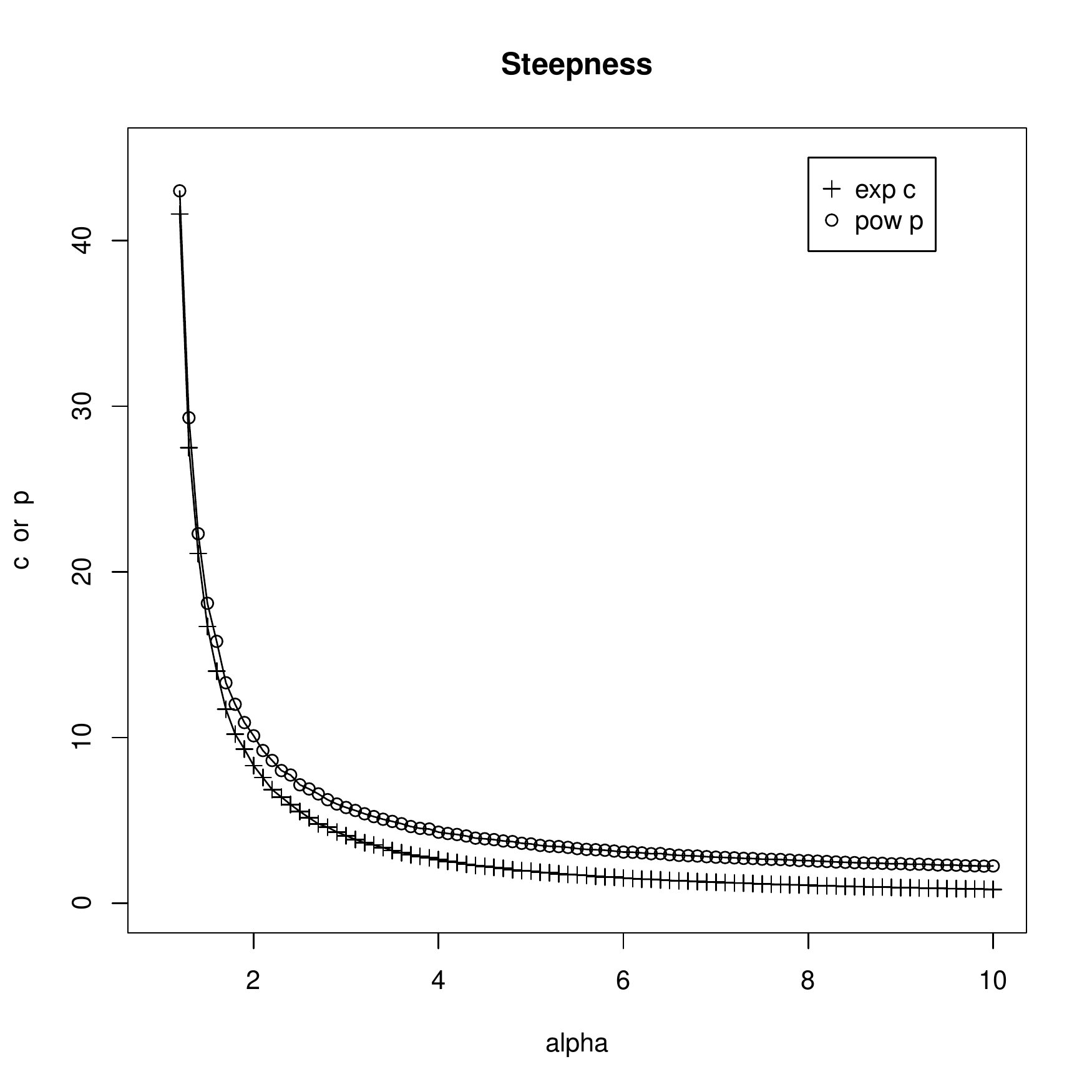} \label{osa_steep}}
\subfigure[]{\includegraphics[width = 0.48\textwidth]{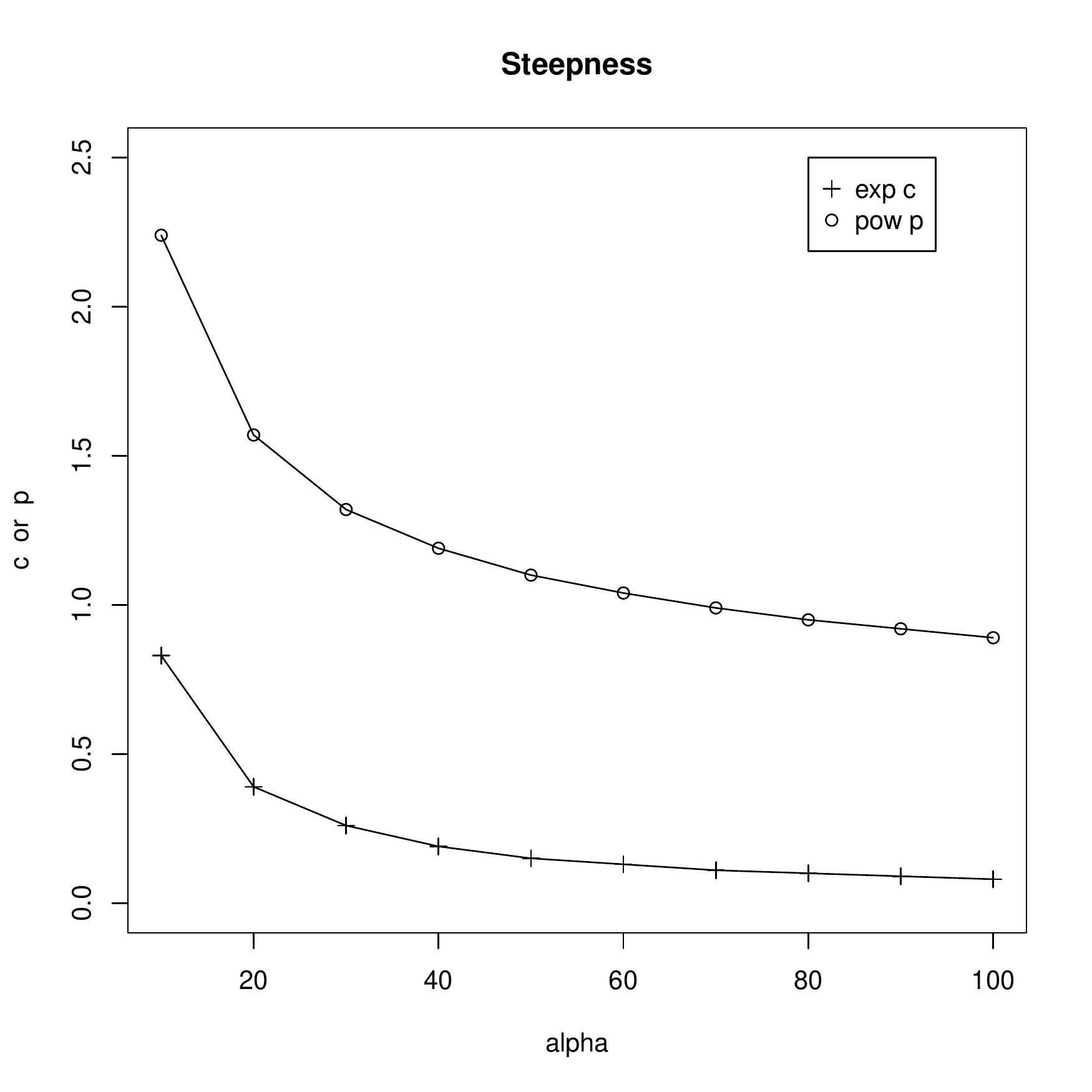} \label{oma_steep}}
\caption{Maximum Revenue and Maximimizing Steepness Parameters for $n = 10$.}
\label{olos_rev_steep}
\end{figure}

\begin{figure}[ht]
\subfigure[]{\includegraphics[width = 0.48\textwidth]{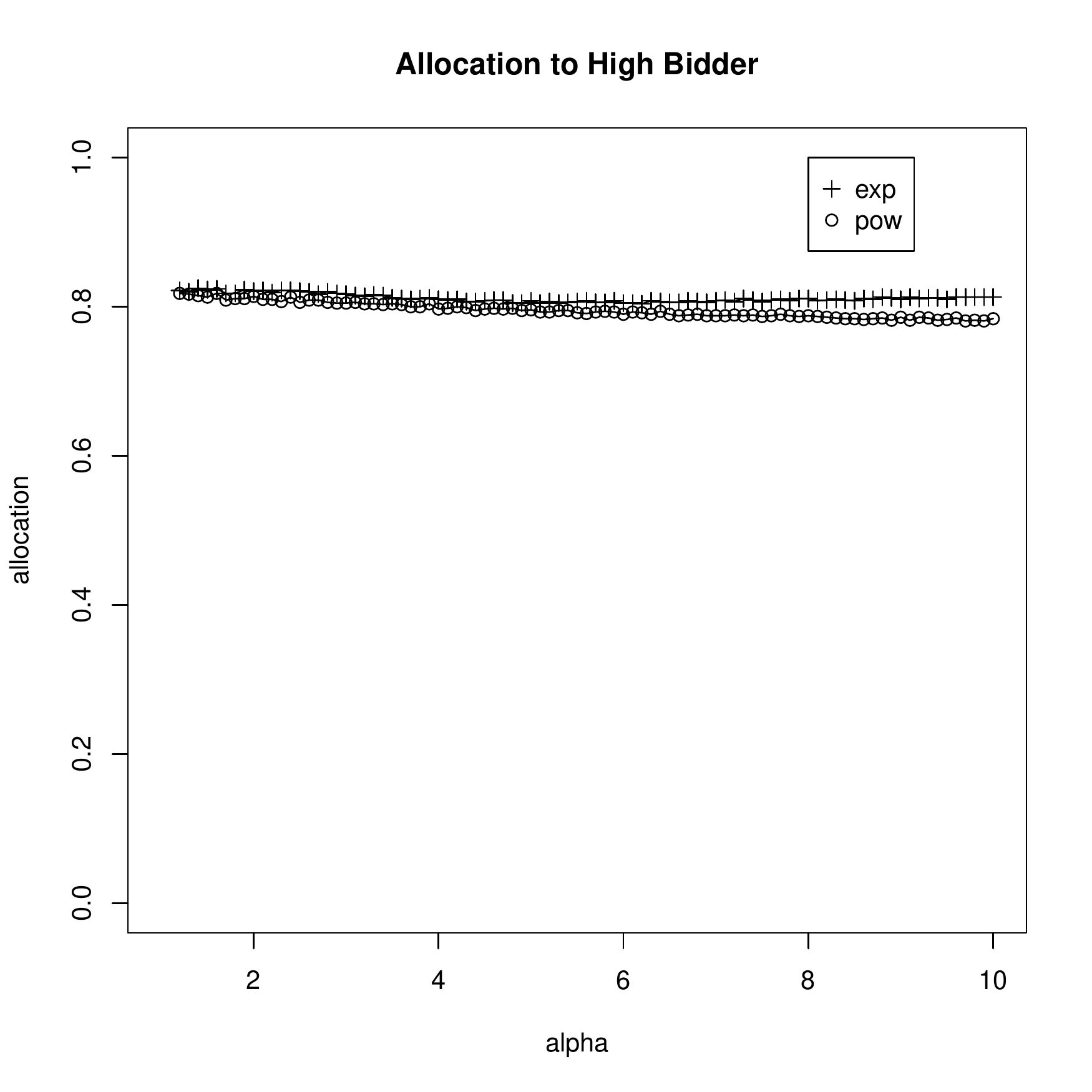} \label{osa_alloc}}
\subfigure[]{\includegraphics[width = 0.48\textwidth]{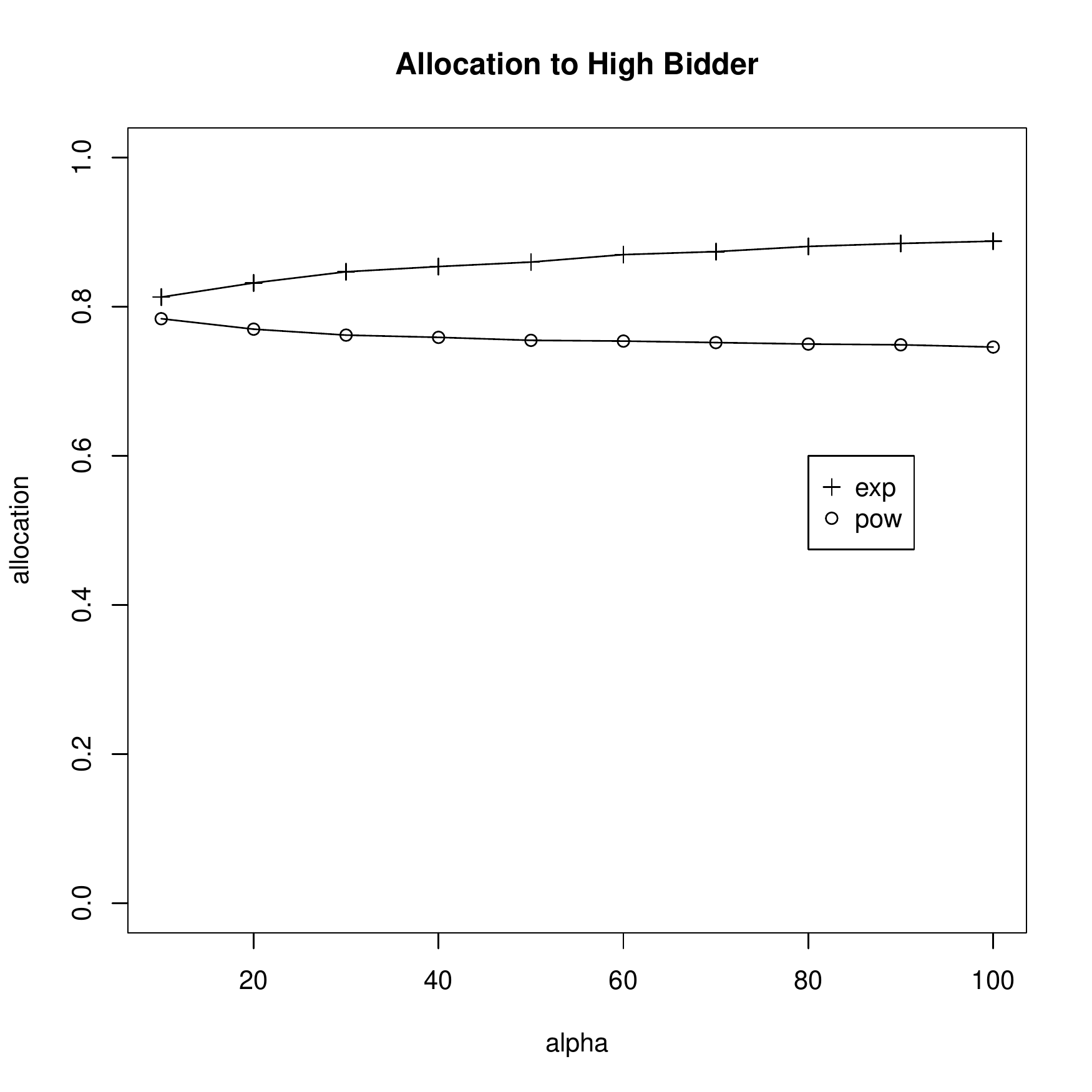} \label{oma_alloc}}
\subfigure[]{\includegraphics[width = 0.48\textwidth]{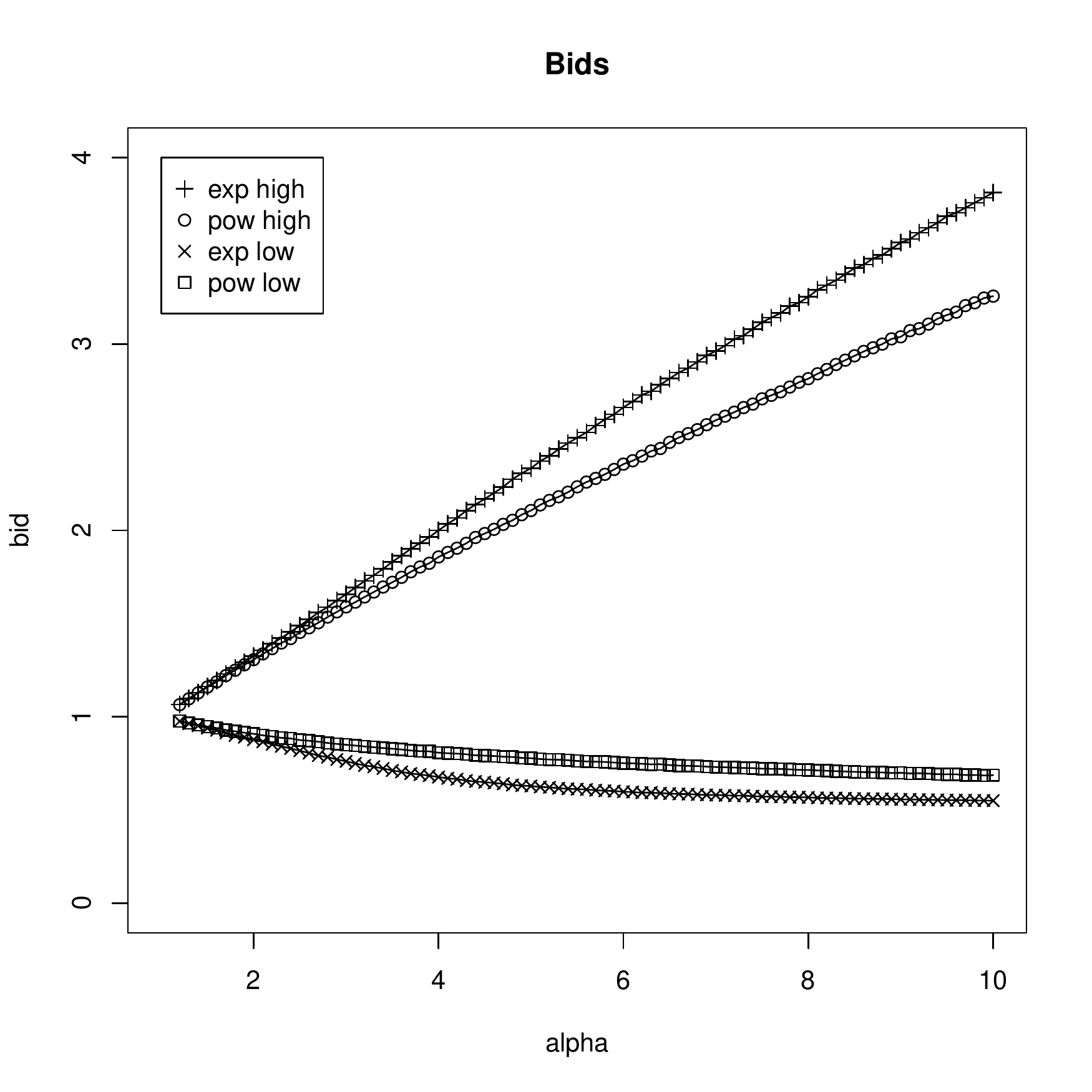} \label{osa_bids}}
\subfigure[]{\includegraphics[width = 0.48\textwidth]{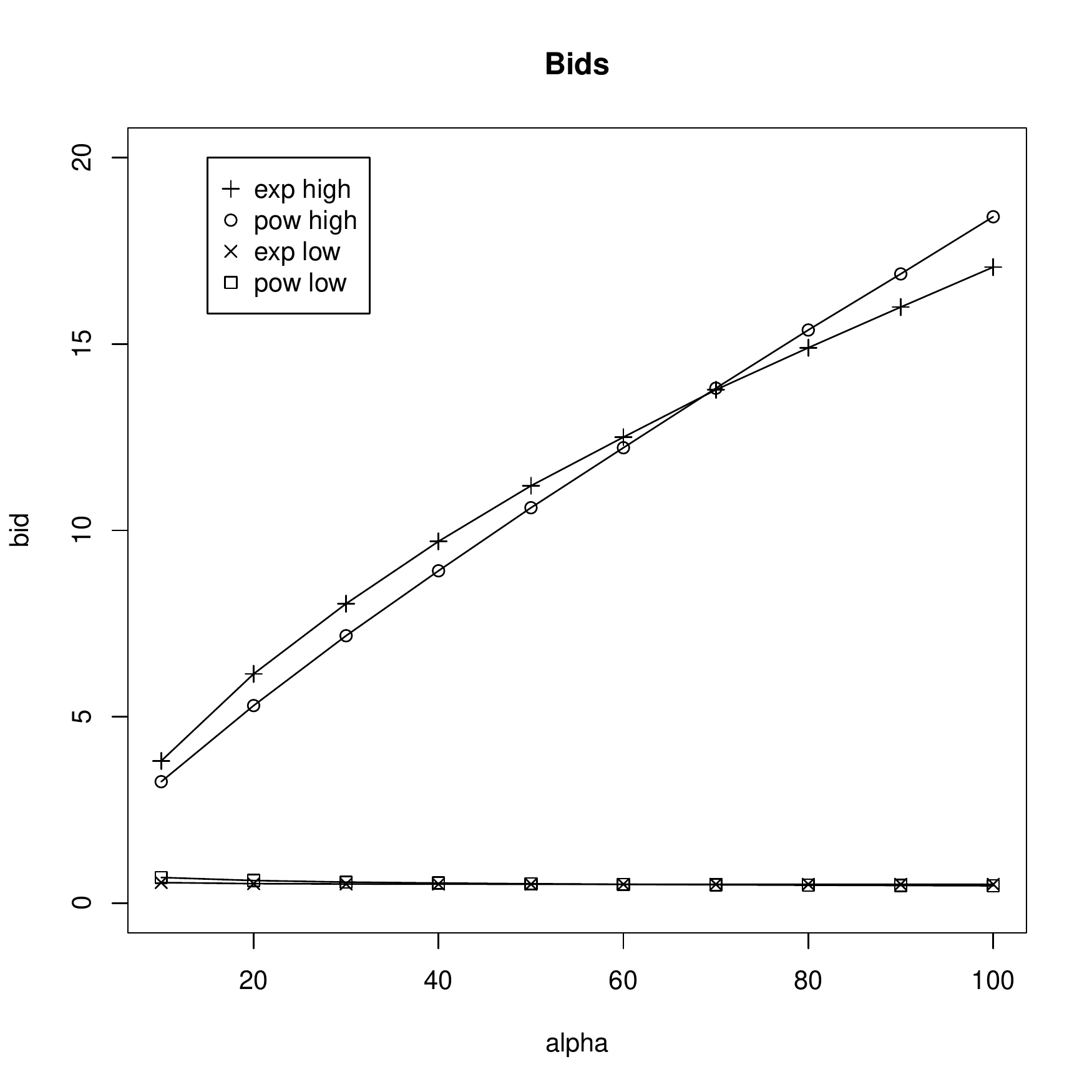} \label{oma_bids}}
\caption{Allocation to High Bidder and Bids at Maximum Revenue for $n = 10$.}
\label{olos_alloc_bids}
\end{figure}

\section{Discussion}
For exponential weight functions, we have characterized bids at a pure-strategy Nash equilibrium. We have also shown that exponential weight functions can provide more revenue than power weight functions, for a two-bidder auction and a ten-bidder auction with a single high private value bidder and nine equal private value bidders, if there is strong or even moderate competition. So exponential weight functions should be considered for quasi-proportional auctions. 

For both exponential and power weight functions, increasing competition by having a ratio of bidder private values closer to one increases the steepness of the revenue-maximizing weight function. The relationship between this steepness and increasing competition by having more bidders is more complex. Analyzing and understanding this relationship warrants further study. Wen et al. \cite{wen15} show that, for power weight functions, as the number of bidders increases from two, the optimal steepness initially decreases, then increases. As $\alpha$ increases, the initial decrease becomes less pronounced. Comparing Figures \ref{sa_steep} and \ref{osa_steep} in this paper, for power weight functions, the optimal steepness for ten bidders is actually less than for two bidders until $\alpha=10$. Then, comparing Figures \ref{ma_steep} and \ref{oma_steep}, there is a crossover before $\alpha=20$. For exponential weight functions, this effect is even more pronounced: the optimal steepness for ten bidders is less than for two bidders for $\alpha = 1.2$ to $\alpha = 100$ and beyond (based on the same pairs of figures and on computations to $\alpha=1000$, not shown in the figures). Understanding this phenomenon may give insight on how to develop new weight functions and how to select a weight function. 

% A fixed weight function with a pre-selected p or c is prior-free. Selecting a p for a power weight function is scale-free, since if all bidders bid multiply their bids by a constant, the allocations remain the same. Selecting a c for an exponential weight function is not even scale-free, since the allocations would remain the same only if c was multiplied by the inverse of a scale factor applied to all bids. Would be interesting to explore equilibria and revenue if the auctioneer selects c or p after seeing the bids, and the bidders know this. 

One direction for future research is to develop effective methods to choose the steepness parameter, given limited information about bidders' private values. Another direction is to investigate more complex weight functions, for example weight functions that are relatively steep for low bids and less steep for high bids. For a related idea, refer to \cite{nguyen10}, for a quasi-proportional auction that determines allocations based on the ratio between bids. Another direction for future research is to investigate analogs to reserve prices for quasi-proportional allocations, for example have the auctioneer submit a bid and withhold the allocation for that bid from the allocation to bidders. This method decreases risk to the seller compared to using a reserve price, because this method makes zero revenue less likely. 

One more direction for future research is to explore revenue maximization over a more general set of weight functions that includes both powers and exponentials. Since $ e^x = 1 + x + \frac{x^2}{2!} + \frac{x^3}{3!} + \ldots $:
$$ e^{cx}-1 = cx + \frac{c^2}{2!} x^2 + \frac{c^3}{3!} x^3 + \ldots $$
So our exponential weight functions belong to the general set of polynomial weight functions (without constant terms, so $f(0)=0$), if we allow infinite degree:
$$ f_{\cc}(x) = \sum_{i=1}^{\infty} c_i x^i, $$
where $\cc = (c_1, c_2, \ldots )$ is the vector of coefficients. Of course, power weight functions also belong to this set, as polynomials with a single coefficient of one in $\cc$ and the other coefficients zero. It would be interesting to understand how the coefficients of the revenue-optimizing weight function of the form $f_{\cc}$ depend on the relationships between private values and on the number of bidders.

\bibliographystyle{abbrv}
\bibliography{bax}

\end{document}